\newtheorem{thm}{Theorem}[section]
\newtheorem{lem}[thm]{Lemma}
\newtheorem{cor}[thm]{Corollary}
\newtheorem{defn}[thm]{Definition}
\newcommand{\E}{\textrm{\textbf{E}}}
\begin{document}

\title{One Tree Suffices: A Simultaneous $O(1)$-Approximation for Single-Sink Buy-at-Bulk}
\author{Ashish Goel
\thanks{Departments of Management Science and Engineering, and by courtesy,
Computer Science, Stanford University. Email: \href{mailto:ashishg@stanford.edu}{ashishg@stanford.edu}.
Research funded by an NSF IIS grant, by funds from Google, Microsoft, and Cisco, a 3-COM faculty fellowship, and by the ARL Network Science CTA.}
\\Stanford University
\and Ian Post
\thanks{Department of Computer Science, Stanford University. Email: \href{mailto:itp@stanford.edu}{itp@stanford.edu}.  
Research funded by an NSF IIS grant.}
\\Stanford University}


\maketitle

\pdfbookmark[1]{Abstract}{MyAbstract}
\begin{abstract}
We study the single-sink buy-at-bulk problem with an unknown cost function.
We wish to route flow from a set of demand nodes to a root node, where the cost of routing $x$ total flow along an edge is proportional to $f(x)$ for some concave, non-decreasing function $f$ satisfying $f(0)=0$.  
We present a simple, fast, combinatorial algorithm that takes a set of demands and constructs a single tree $T$ such that for all $f$ the cost $f(T)$ is a 47.45-approximation of the optimal cost for that $f$.  
This is within a factor of 2.33 of the best approximation ratio currently achievable when the tree can be optimized for a specific function.
Trees achieving simultaneous $O(1)$-approximations for all concave functions were previously not known to exist regardless of computation time.
\end{abstract}

\section{Introduction}

Many natural network design settings exhibit some form of economies of scale that reduce the costs when many flows are aggregated together.  
We may benefit from cheaper bandwidth when laying high capacity network links \cite{andrews1998access}, reduced infrastructure costs or bulk discounts for shipping large amounts of goods together \cite{salman1997bbn}, or summarization and compression of correlated information flows  \cite{krishnamachari2002modelling}.  
These scenarios are known in the literature as buy-at-bulk problems.  In a general buy-at-bulk problem we are given a graph and a set of demands for flow between nodes.  
The cost per unit length for routing a total of $x$ flow along an edge is $f(x)$ for some function $f$.  To model the economies of scale, we assume $f$ is concave and monotone non-decreasing.


We will focus on single-sink (or single-source) case, where all demands must be routed to a given root.
When $f$ is known, the problem becomes the well-studied single-sink buy-at-bulk (SSBaB) problem.  SSBaB is $NP$-hard---it generalizes the Steiner tree problem---but constant-factor approximations are known for any given $f$ (e.g.\ \cite{guha2001cfa, grandoni2010network}).  The special case where $f$ has the form $f(x) = \min\{x,M\}$ for some $M$ (edges can be ``rented'' for linear cost or ``bought'' for a fixed cost) is known as the single-sink rent-or-buy (SSRoB) problem and has also received significant attention (e.g.\ \cite{karger2000building, eisenbrand2010connected}).

Buy-at-bulk algorithms produce trees that are heavily tailored to the particular function at hand, but in some scenarios $f$ may be unknown or known to change over time.  
One setting where this arises is in aggregation of data in sensor networks.  The degree of redundancy among different sensor measurements may be unknown, or the same network may be used for aggregating different types of information with different amounts of redundancy.
In other situations rapid technological advancement may cause bandwidth costs to change drastically over time.
Further, in the interest of simplifying the design process and building robust networks, it may be useful to decouple the problem of designing the network topology from that of determining the exact characteristics of the information or goods flowing through that network.  In these settings it is desirable to find a single tree that is simultaneously good for all cost-functions, and from a theoretical perspective, the existence of such trees would reveal surprising structure in the problem.

There are two natural objective functions which capture the idea of simultaneous approximation for multiple cost functions.  Let $\mathcal{F}$ be the set of all concave, monotone non-decreasing cost functions satisfying $f(0) = 0$, $f(T)$ be the cost of a routing tree $T$ under function $f$, and $T_f^*$ be the optimal routing graph for $f$.
Note that due to the concavity of $f$ we may assume that $T_f^*$ is a tree.  
Let $\mathcal{R}$ be a randomized algorithm that returns a feasible routing tree $T$.
First, we could try to minimize the quantity
\begin{align}
\label{r1}
 \sup_{f \in \mathcal{F}} \frac{\E_{\mathcal{R}} [f(T)]}{f(T_f^*)}
\end{align}
which we call the {\em oblivious} approximation ratio.
If the oblivious ratio is small, then $\mathcal{R}$ returns a distribution that works well in expectation for any $f$.  
However, there may be no sample from this distribution that works for everything: for any tree $T$ there may be functions for which $f(T)$ is expensive.

To circumvent this problem we can work with the much stronger {\em simultaneous} approximation ratio.  For a deterministic or randomized algorithm $\mathcal{A}$ that returns a tree $T_{\mathcal{A}}$ the simultaneous ratio of $\mathcal{A}$ is defined as:
\begin{align}
\label{r2}
 \E_{\mathcal{A}} \left[\sup_{f \in \mathcal{F}} \frac{f(T_{\mathcal{A}})}{f(T_f^*)}\right]
\end{align}
A bound on the simultaneous ratio subsumes one on the oblivious ratio and proves there exists a single tree that is simultaneously good for all $f$. 

We emphasize that the distinction between the simultaneous and oblivious objectives is not a technicality in the objective but rather a fundamental difference and that the gap between these ratios can be large.  
Consider the problem of embedding arbitrary metrics into tree metrics, another problem that requires bounding the cost under many different functions (i.e.\ distortion of each edge).  
It is well-known that distributions over trees can achieve $O(\log n)$ expected distortion for all edges \cite{fakcharoenphol2003tba} but that even for simple graphs like the $n$-cycle no single tree can do better than $\Omega(n)$ distortion \cite{rabinovich1998lower}.  Therefore, the ratio between maximum expected distortion and the expected maximum distortion is $\Omega(n/\log n)$ in this case.

Goel and Estrin \cite{goel2003soc} introduced the problem of simultaneous SSBaB and gave an algorithm with an $O(\log D)$ bound on the simultaneous ratio \eqref{r2}, where $D$ is the total amount of demand.  
Goel and Post \cite{goel2009oblivious} recently improved the oblivious ratio \eqref{r1} to $O(1)$ for a large constant.  Trees for which the simultaneous ratio was $O(1)$ were not known to exist regardless of computation time.

In this paper we give the first constant guarantee on the simultaneous ratio, resolving the major open question of Goel and Estrin and Goel and Post \cite{goel2003soc, goel2009oblivious}.
Several aspects of our algorithm and analysis bear mentioning:
\begin{itemize}

\item We achieve a simultaneous approximation ratio of 47.45.  This is within a factor of 2.33 of the current best approximation for normal SSBaB of 20.42 \cite{grandoni2010network} and substantially smaller than the $O(1)$ oblivious bound \cite{goel2009oblivious}, which we estimate to be around 15 million.

\item The algorithm is entirely combinatorial, in contrast to the result of Goel and Post \cite{goel2009oblivious}, which uses separation oracles to prove an $O(1)$-oblivious distribution exists but reveals little of its structure.


\item Our analysis is short and simple, no more complex than the analysis of a normal SSBaB algorithm.

\item The runtime is only $O((t(n,m) + m + n\log n)\log D)$ for a graph with $n$ nodes, $m$ edges, and $D$ demand, where $t(n,m)$ is the runtime of an SSRoB approximation.

\end{itemize}

The algorithm is quite simple.  We first find approximate trees for a set of rent or buy basis functions,
prune this set to obtain a subset $L$ of trees whose total rent costs are increasing geometrically while total buy costs are dropping geometrically, and then prove it suffices to approximate every tree in $L$.  The set of bought nodes for each tree in $L$ defines a series of tree layers, which we stitch together using light approximate shortest-path trees (LASTs) \cite{khuller1995bms} to approximate both the minimum spanning tree (MST) and shortest-path tree.  
Finally, we consider any layer in the tree.  Using the geometrically changing costs and the properties of the LAST construction, we conclude that everything within the layer is an approximate MST, and everything outside approximates the shortest-path tree cost.

%

\subsection{Related Work}

The SSBaB problem was first posed by Salman et al.\ \cite{salman1997bbn}, and the first general approximation algorithm was given by Awerbuch and Azar \cite{awerbuch1997bbn}, who used metric tree embeddings \cite{Bartal96probabilistic} to achieve an $O(\log^2 n)$ ratio, later improved to $O(\log n)$ using better embeddings by Bartal \cite{bartal1998aam} and Fakcharoenphol et al. \cite{fakcharoenphol2003tba}.  Guha et al.\ \cite{guha2001cfa} gave the first constant approximation, and follow-up work by Talwar \cite{talwar2002ssb}, Gupta et al. \cite{gupta2007approximation}, Jothi and Raghavachari \cite{jothi2004iaa}, Grandoni and Italiano \cite{grandoni2006ias}, and Grandoni and Rothvo{\ss} \cite{grandoni2010network} has since reduced the constant to 20.42.  Most recent algorithms for SSBaB (and several related problems) are based on the sample and augment framework of Gupta et al.\ \cite{gupta2007approximation}.  Many algorithms using this framework have been derandomized by van Zuylen \cite{van2009deterministic}.

The special case of SSRoB has also been extensively studied, often as a special case of the connected facility location problem.  The first constant factor approximation was given by Ravi and Salman \cite{ravi1999approximation} as a special case of the traveling purchaser problem.  Karger and Minkoff \cite{karger2000building} gave an alternate algorithm and introduced connected facility location.  Gupta et al.\ improved the approximation to 9.01 \cite{gupta2001provisioning}, Swamy and Kumar to 4.55 \cite{swamy2004primal}, and Gupta et al.\ to 3.55 \cite{gupta2007approximation}.  Gupta et al.\ \cite{gupta2008cost} derandomized the 3.55-approximation to achieve a 4.2-approximation.  
Eisenbrand et al.\ \cite{eisenbrand2010connected} developed a randomized 2.92-approximation, which recently improved to 2.8 using the 1.39-approximation for Steiner tree of Byrka et al.\ \cite{byrka2010improved}.
Since we employ the 2.8-approximation, and the claimed ratio does not currently appear elsewhere in the literature, we present the brief calculation deriving this value in the appendix.
Both Williamson and van Zuylen \cite{williamson2007simpler} and Eisenbrand et al.\ \cite{eisenbrand2010connected} independently derandomized the 2.92-approximation to achieve a deterministic 3.28-approximation.

The problem of simultaneous approximation for multiple cost functions has been studied using both the oblivious and simultaneous objectives.  Goel and Estrin \cite{goel2003soc} were the first to explicitly pose the question of simultaneous approximations and gave an algorithm with an $O(\log D)$ simultaneous guarantee.  Prior to that Khuller et al.\ \cite{khuller1995bms} gave an algorithm to simultaneously approximate the two extreme cost functions $f(x) = 1$ and $f(x)=x$---a result which plays an important role in this paper---and metric tree embeddings had been applied to SSBaB \cite{awerbuch1997bbn, fakcharoenphol2003tba} to achieve an $O(\log n)$ bound for the oblivious objective. Enachescu et al.\ \cite{enachescu2005sfa} gave an $O(1)$ simultaneous guarantee for the special case of grid graphs with some spatial correlation.  Goel and Post \cite{goel2009oblivious} proved that an oblivious guarantee of $O(1)$ is achievable for all graphs.  Gupta et al.\ \cite{gupta2006ond} and Englert and R{\"a}cke \cite{englert2009oblivious} have studied several generalizations of the problem where both the demands and function are unknown, and multiple sinks are allowed.  In these settings the guarantee is generally $O(\log n)$ or $O(\textrm{polylog } n)$.

\section{Notation and Preliminaries}

Formally, we are given a graph $G = (V,E)$ with edge lengths $l_e$ for $e \in E$, a root node $r$, and a set of demand nodes $\mathcal{D} \subseteq V$ with integer demands $d_v$.  The total demand is $D = \sum_v d_v$.  We want to route $d_v$ flow from each $v$ to $r$ as cheaply as possible, where the cost of routing $x_e$ flow along edge $e$ is $l_ef(x_e)$ for some unknown, concave, monotone increasing function $f$ satisfying $f(0) = 0$.  Not knowing $f$, our objective is to find a feasible tree $T$ minimizing $\sup_f f(T)/f(T_f^*)$, where $T_f^*$ is the optimal graph for $f$.

We first show that we can restrict our analysis to a smaller class of basis functions.
Let $\epsilon >0$ be a small constant which will trade off the runtime and the approximation ratio, and $K = \lceil\log_{1+\epsilon} D\rceil$.  
For $0 \le i \le K$, define $M_i = (1+\epsilon)^i$, $A_i(x) = \min\{x,M_i\}$, and $T_i^*$ as the optimal tree for $A_i$.  
By the monotonicity and concavity of $f$, whenever $M_i \le x \le M_{i+1}$ we have $f(M_i) \le f(x) \le f(M_{i+1}) \le (1+\epsilon)f(M_i)$, so with a loss of only a factor of $1+\epsilon$ we can interpolate between $f(M_i)$ and $f(M_{i+1})$ and assume $f$ is piecewise linear with breakpoints only at powers of $1+\epsilon$.  
A nondecreasing concave function that is linear between powers of $1+\epsilon$ can be written as a nonnegative linear combination of $\{A_i\}_{0\le i \le K}$ by setting coefficients equal to the changes in slope: if the slope drops from $\delta_i$ to $\delta_{i+1}$ at $(1+\epsilon)^i$ it induces the term $(\delta_i-\delta_{i+1})A_i(x)$.  Now for a linear combination $\sum_i a_iA_i(x)$ and a tree $T$
\[
\frac{\sum_i a_iA_i(T)}{\sum_i a_iA_i(T_f^*)} \le \frac{\sum_i a_iA_i(T)}{\sum_i a_iA_i(T_i^*)} = \frac{\sum_i a_iA_i(T_i^*)\frac{A_i(T)}{A_i(T_i^*)}}{\sum_i a_iA_i(T_i^*)} \le \max_i \frac{A_i(T)}{A_i(T_i^*)}
\]
so it suffices to upper bound $\max_i A_i(T)/A_i(T_i^*)$.

We now define some notation and subroutines that will be important for our algorithm.
The problem of finding a good aggregation tree for the function $A_i(x) = \min\{x,M_i\}$ is an instance of the SSRoB problem, and we can find a $\lambda$-approximate tree $T_i$, where $\lambda$ is the best approximation ratio known, currently equal to $2.8$ using the algorithm of Eisenbrand et al.\ \cite{eisenbrand2010connected} and Byrka et al.\ \cite{byrka2010improved}.
We will assume the algorithm is deterministic.  If not (as in the case of the 2.8-approximation) we repeat it a polynomial number of times and pick the best tree, so we are close to a $\lambda$-approximation with very high probability.  In this case, our simultaneous approximation algorithm will have some tiny probability of failure.

The cost $A_i(T_i)$ can be broken into two pieces, the rent cost and the buy cost, based on whether $A_i$ is maxed out at $M_i$:

\begin{defn}
For an aggregation tree $T_i$ for cost function $A_i$ with $x_e$ flow on edge $e$, the {\em rent cost} $R_i$ and {\em normalized buy cost} $B_i$ are defined as
\begin{gather*}
R_i = \sum_{e \in T_i,\, x_e < M_i} l_eA_i(x_e) \\
B_i = \sum_{e \in T_i, \, x_e \ge M_i} l_e\frac{A_i(x_e)}{M_i} = \sum_{e \in T_i,\, x_e \ge M_i} l_e
\end{gather*}
\end{defn}

Note that edge costs composing $R_i$ are weighted by the amount of flow they carry, but edges in $B_i$ are not; they use unweighted edge costs.  The total cost of $T_i$ is given by $A_i(T_i) = R_i + M_iB_i$.  
The rent and buy costs also partition the nodes of $T_i$ into two sets:

\begin{defn}
The {\em core} $C_i$ of tree $T_i$ consists of $r$ and all nodes spanned by bought edges and the {\em periphery} contains all vertices outside $C_i$.
\end{defn}

If we condition on the nodes in $C_i$ then the rent-or-buy problem becomes easy: demands outside the core pay linear cost until they reach $C_i$, so they should take the shortest path, 
whereas within $C_i$ we pay a fixed cost per edge length, so the best strategy is to follow the min spanning tree.  
The cost $R_i$ is therefore at least the sum of shortest path distances to $C_i$, while $B_i$ is at least the weight of the MST of $C_i$.

In addition to the SSRoB approximation, we will also employ the light, approximate shortest-path tree algorithm of Khuller et al.\ \cite{khuller1995bms}:
\begin{defn}[\cite{khuller1995bms}]
For $\alpha \ge 1$ and $\beta \ge 1$, an {\em $(\alpha,\beta)$-light, approximate shortest-path} or {\em $(\alpha,\beta)$-LAST} is a spanning tree $T$ of $G$ with root $r$ such that
\begin{itemize}
\item For each vertex $v$, the distance from $v$ to $r$ in $T$ is at most $\alpha$ times the shortest path distance from $v$ to $r$ in $G$.
\item The edge weight of $T$ is at most $\beta$ times the weight of an MST of $G$.
\end{itemize}
\end{defn}
Khuller et al.\ show how to construct an $(\alpha,\beta)$-LAST for any $\alpha > 1$ and $\beta \ge \frac{\alpha+1}{\alpha-1}$.  
Roughly, the algorithm performs a depth-first traversal of the MST of $G$ starting from $r$, checking the stretch of the shortest path to each node.  If the path to some $v$ has blown up by at least an $\alpha$ factor, then it updates the tree to take the shortest path from $v$ to $r$, adjusting other tree edges and distances accordingly.  See the paper \cite{khuller1995bms} for a full description and analysis.

Finally, we define four parameters $\alpha$, $\beta$, $\gamma$, and $\delta$ used by our algorithm whose values we will optimize at the end.
\begin{itemize}
\item $\alpha > 1$ is the approximation ratio for shortest paths used in our LAST.

\item $\beta \ge \frac{\alpha+1}{\alpha-1}$ is the corresponding approximation to the MST in the LAST.

\item $\gamma > 1$ is the factor by which normalized buy costs $B_i$ increase from layer to layer in our tree.

\item $\delta > 1$ is the factor by which rent costs $R_i$ drop from layer to layer.

\end{itemize}
We now turn to a more thorough explanation of tree layers.

\section{Tree Layers}

In the normal SSBaB problem, the cost function is defined as $f(x) = \min_j \{\sigma_j + \delta_jx\}$, the cheapest of a collection of different ``pipes'' or ``cables'' given to the algorithm, each with an affine cost function $\sigma_j + \delta_jx$.  
It is common (e.g.\ \cite{guha2001cfa, gupta2007approximation}) to first prune these pipes to a smaller set with geometrically decreasing $\delta_j$'s and geometrically increasing $\sigma_j$'s and then build a solution in layers where each layer routes with a different pipe.

We perform an analogous procedure.  
We would like to build our simultaneous tree $T$ in a series of nested layers defined by the cores $C_i$ of each tree $T_i$, so that the core of $T$ under $A_i$ is similar to $C_i$, but we have no guarantees on the relationships between different cores: $C_i$ and $C_k$ may be entirely disjoint except for $r$.  
However, we will show that as long as normalized buy costs $B_i$ and $B_k$ are within a constant factor of each other, the same core can be used for both trees.
Consequently, we are able to define nested layers by choosing one $C_i$ for each order of magnitude of $B_i$.

After finding $\lambda$-approximate trees $T_i$ for each $A_i$, we loop through the costs $B_i$ and $R_i$, discarding $i$ whenever $B_i$ does not drop by $\gamma$ or $R_i$ does not grow by $\delta$.
We are left with a subset $L$ of the $T_i$ where the $B_i$'s are dropping by a factor of $\gamma$ and the $R_i$'s are growing by a factor of $\delta$.  The cores $C_i$ for each $i \in L$ will define the layers of our tree.
Algorithm \ref{layer_alg} describes the procedure in more detail.

\begin{algorithm}[htb]
\caption{Finding tree layers}
\label{layer_alg}
\LinesNumbered
\DontPrintSemicolon
\SetAlgoNoEnd
\SetAlgoNoLine
\SetKw{KwDownto}{down to}

\KwIn{Graph $G$ and demands $\mathcal{D}$}
\KwOut{Set $L$ and cores $C_i$ for each $i \in L$}

\For{$i \leftarrow 0$ \KwTo $K$} {
$T_i \leftarrow \lambda$-approximate tree for $A_i(x)$\;
}

\For{$i \leftarrow 1$ \KwTo $K$} { \label{first_loop}
\lIf {$A_{i}(T_{i-1}) < A_{i}(T_i)$} {
$T_{i} \leftarrow T_{i-1}$ \; \label{first_loop_end}
} }

\For{$i \leftarrow K-1$ \KwDownto $0$} { \label{second_loop}
\lIf {$A_{i}(T_{i+1}) < A_{i}(T_i)$} {
$T_i \leftarrow T_{i+1}$ \; \label{second_loop_end}
} }

\For{$i \leftarrow 0$ \KwTo $K$} {
calculate $C_i$, $B_i$, $R_i$\; \label{monotone_line}
}

\BlankLine

$L_B \leftarrow \emptyset$ \;
$B \leftarrow \infty$ \;

\For{$i \leftarrow 0$ \KwTo $K$} { \label{lb_start}
\If{$B_i < \frac{1}{\gamma}B$} {
$L_B \leftarrow L_B \cup \{i\}$ \;
$B \leftarrow B_i$ \; \label{lb_end}
} }

$L \leftarrow \emptyset$ \;
$R \leftarrow \infty$ \;
\ForEach {$i \in L_B$ in decreasing order} { \label{l_start}
\If{$R_i < \frac{1}{\delta}R$} {
$L \leftarrow L \cup \{i\}$ \;
$R \leftarrow R_i$ \; \label{l_end}
} }

\KwRet{$L$ and $C_i$ for each $i \in L$}

\end{algorithm}

Intuitively, as it becomes more expensive to buy edges the optimum will buy fewer edges and rent more.  In the case of approximations, the progression becomes muddled because for some $i$ the approximation guarantee may be tight while for $i+1$ we may get lucky and find the optimum, resulting in both rent and normalized buy costs dropping.  
We first show that the monotonicity in buy and rent costs still holds as long as each $T_i$ is better for $A_i$ than both $T_{i-1}$ and $T_{i+1}$.

\begin{lem}
\label{monotonicity_lemma}
After line \ref{monotone_line} of Algorithm \ref{layer_alg}, for every $i$ we have $B_i \ge B_{i+1}$ and $R_i \le R_{i+1}$.
\end{lem}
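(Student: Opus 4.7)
The plan is to combine two optimality invariants guaranteed by the preprocessing loops with a tight edge-by-edge bound on the difference $A_{i+1}(T) - A_i(T)$. The invariants I will use are that, after both loops, for every valid $i$ we simultaneously have $A_i(T_i) \le A_i(T_{i+1})$ (enforced by loop 2) and $A_{i+1}(T_{i+1}) \le A_{i+1}(T_i)$ (enforced by loop 1 at index $i+1$). A short inductive check shows that loop 2 cannot violate the invariant set up by loop 1, since each swap in loop 2 replaces a tree $T_j$ only with one of strictly smaller $A_j$-cost, so earlier inequalities are only tightened.

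Next I would derive the key bound. Decomposing edge by edge,
\[
A_{i+1}(T) - A_i(T) \;=\; \sum_{e \in T} l_e \bigl[\min(x_e, M_{i+1}) - \min(x_e, M_i)\bigr],
\]
the contribution of an edge is $0$ when $x_e \le M_i$, equals $l_e(x_e - M_i) \in \bigl(0, l_e(M_{i+1}-M_i)\bigr)$ when $M_i < x_e < M_{i+1}$, and equals exactly $l_e(M_{i+1}-M_i)$ when $x_e \ge M_{i+1}$. Writing $\hat{B}_j(T) = \sum_{e \in T : x_e \ge M_j} l_e$ for the normalized buy cost of an arbitrary tree $T$ at threshold $M_j$ (so $B_j = \hat{B}_j(T_j)$), summing the three cases yields
\[
(M_{i+1}-M_i)\,\hat{B}_{i+1}(T) \;\le\; A_{i+1}(T) - A_i(T) \;\le\; (M_{i+1}-M_i)\,\hat{B}_i(T).
\]

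For the buy-cost inequality, I would add the two invariants to obtain $A_{i+1}(T_{i+1}) - A_i(T_{i+1}) \le A_{i+1}(T_i) - A_i(T_i)$ and then apply the lower bound to the left side and the upper bound to the right side, yielding $(M_{i+1}-M_i) B_{i+1} \le (M_{i+1}-M_i) B_i$ and hence $B_i \ge B_{i+1}$. For the rent inequality, I would expand
\[
R_{i+1} - R_i \;=\; \bigl[A_{i+1}(T_{i+1}) - M_{i+1} B_{i+1}\bigr] - \bigl[A_i(T_i) - M_i B_i\bigr]
\]
and use only the invariant $A_i(T_i) \le A_i(T_{i+1})$ together with the lower bound above to get $A_{i+1}(T_{i+1}) - A_i(T_i) \ge A_{i+1}(T_{i+1}) - A_i(T_{i+1}) \ge (M_{i+1}-M_i) B_{i+1}$; substituting collapses the expression to $R_{i+1} - R_i \ge M_i(B_i - B_{i+1})$, which is non-negative by the buy-cost inequality just proved.

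I expect the main conceptual obstacle to be finding the right framing: recognizing that the two loops together supply both directions needed for an exchange-style argument, and isolating the sharp edge-by-edge bound in terms of the buy costs at the two thresholds $M_i$ and $M_{i+1}$. Once those are in hand, the remaining manipulation is entirely mechanical.
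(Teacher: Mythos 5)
Your proof is correct and essentially mirrors the paper's: you establish the same two loop invariants $A_i(T_i) \le A_i(T_{i+1})$ and $A_{i+1}(T_{i+1}) \le A_{i+1}(T_i)$, derive the same intermediate inequality $M_i(B_i - B_{i+1}) \le R_{i+1} - R_i$, and conclude identically. The only difference is cosmetic --- you package the key cost bound as a two-sided edge-by-edge sandwich on $A_{i+1}(T) - A_i(T)$ in terms of normalized buy costs at the two thresholds, whereas the paper phrases it as the one-sided bound $A_i(T_k) \le R_k + M_i B_k$ applied with $k = i$ and $k = i+1$; specialized to $T_i$ and $T_{i+1}$, the two formulations yield exactly the same inequalities.
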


\begin{proof}
First we show that for each $i$, $A_i(T_i) \le \min\{A_i(T_{i+1}),A_i(T_{i-1})\}$.  After the loop on lines \ref{first_loop}--\ref{first_loop_end}, we have $A_i(T_i) \le A_i(T_{i-1})$, and after the second loop on lines \ref{second_loop}--\ref{second_loop_end} we have $A_i(T_i) \le A_i(T_{i+1})$, so we only need to show that the second loop does not break the first condition.  
If the second loop updates $T_i$ then $A_i(T_i)$ will only shrink, and if it changes $T_{i-1}$ it does this by setting $T_{i-1} \leftarrow T_i$ which preserves $A_i(T_i) \le A_i(T_{i-1})$.

Now consider $A_i(T_k)$ for any $k$.  By definition $A_i(x) \le x$ and $A_i(x) \le M_i$, so to upper bound $A_i(T_k)$ we may assume edges within $C_k$ pay $M_i$ per unit length, which sums to $M_iB_k$, and edges outside $C_k$ pay linear cost, or $R_k$ total, implying $A_i(T_k) \le R_k + M_iB_k$.
Therefore
\begin{gather*}
R_i + M_iB_i = A_i(T_i) \le A_i(T_{i+1}) \le R_{i+1} + M_iB_{i+1} \\
\Longrightarrow M_i(B_i - B_{i+1}) \le R_{i+1} - R_i
\end{gather*}
Similarly,
\begin{gather*}
R_{i+1} + M_{i+1}B_{i+1} = A_{i+1}(T_{i+1}) \le A_{i+1}(T_i) \le R_i + M_{i+1}B_i \\
\Longrightarrow R_{i+1} - R_i \le M_{i+1}(B_i - B_{i+1})
\end{gather*}
Combining the inequalities,
\[
M_i(B_i - B_{i+1}) \le R_{i+1} - R_i \le M_{i+1}(B_i - B_{i+1})
\]
If $B_i - B_{i+1} < 0$ the inequality is false because $M_{i+1} > M_i$, so we conclude $B_i \ge B_{i+1}$.  And using the first inequality, $0 \le M_i(B_i - B_{i+1}) \le R_{i+1} - R_i$, so $R_i \le R_{i+1}$.
\end{proof}

We need to show that we can restrict our attention to $T_i$ for $i \in L$.  
Suppose $i < k$ but $B_i \le \gamma B_k$.  
Using Lemma \ref{monotonicity_lemma}, observe that $A_k(T_i) \le R_i + M_kB_i \le R_k + \delta M_kB_k \le \delta A_k(T_k)$.  
Note this is independent of the size of the intersection of the cores $C_i$ and $C_k$ and any differences in routing.  
The following lemma generalizes this simple but key observation and proves that approximating each $i \in L$ is sufficient.

\begin{lem}
\label{structure_lemma}
Suppose there exists a tree $T$ and constants $c_B$ and $c_R$ such that for all $i \in L$ there exists a partition of the edges of $T$ into two sets $T_{B_i}$ and $T_{R_i}$ satisfying
\begin{itemize}
\item $A_0(T_{B_i}) \le c_B B_i$
\item $A_K(T_{R_i}) \le c_R R_i$
\end{itemize}
then for all $k \in \{0,\ldots,K\}$, $A_k(T) \le \max\{c_B\gamma, c_R\delta\}\lambda A_k(T_k^*)$.
\end{lem}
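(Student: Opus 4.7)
The plan is to reduce the lemma to the following combinatorial fact: for every $k \in \{0,\ldots,K\}$ there exists an index $i \in L$ with $B_i \le \gamma B_k$ \emph{and} $R_i \le \delta R_k$. Given such an $i$, I split $T$ along the partition into $T_{B_i}$ and $T_{R_i}$ and bound $A_k$ on each piece. On $T_{B_i}$, every used edge carries flow $\ge 1$, so $A_0(x_e) = 1$, and $A_k(x_e) \le M_k$, giving $A_k(T_{B_i}) \le M_k A_0(T_{B_i}) \le c_B M_k B_i$. On $T_{R_i}$, $A_k(x_e) \le x_e = A_K(x_e)$ since $M_K \ge D$, giving $A_k(T_{R_i}) \le A_K(T_{R_i}) \le c_R R_i$. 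Substituting the bounds on $B_i$ and $R_i$, using $A_k(T_k) = R_k + M_k B_k$, and invoking the $\lambda$-approximation guarantee $A_k(T_k) \le \lambda A_k(T_k^*)$ then yields the claimed inequality $A_k(T) \le \max\{c_B\gamma, c_R\delta\} \lambda A_k(T_k^*)$.

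The real work is producing such an $i$. I will case-split on the position of $k$ relative to $i_{\max} := \max L$. A useful preliminary observation is that the largest element of $L_B$ is always added to $L$ by the second loop of Algorithm~\ref{layer_alg} (since $R$ is initially $\infty$), so $i_{\max} = \max L_B$. If $k > i_{\max}$, then $k \notin L_B$, and the first loop's refusal of $k$ directly forces $B_{i_{\max}} \le \gamma B_k$, while Lemma~\ref{monotonicity_lemma} gives $R_{i_{\max}} \le R_k$; so $i = i_{\max}$ works. If $k \le i_{\max}$, let $i^* = \min\{i \in L : i \ge k\}$; when $i^* = k$ or $R_{i^*} \le \delta R_k$, taking $i = i^*$ works since monotonicity gives $B_{i^*} \le B_k$.

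The remaining and trickiest subcase is $k \le i_{\max}$, $i^* > k$, $R_{i^*} > \delta R_k$, which I expect to rule out entirely. First, $k \notin L_B$: otherwise the second loop processes $k$ and refuses it (since $k \notin L$), but the refusal condition then yields $R_{i^*} \le \delta R_k$, contradicting the subcase hypothesis. Having $k \notin L_B$, the first loop's refusal of $k$ supplies a largest index $j_0 \in L_B$ with $j_0 < k$ and $B_{j_0} \le \gamma B_k$. If $j_0 \in L$, then $i = j_0$ works. Otherwise $j_0 \in L_B \setminus L$, and the second loop's rejection of $j_0$ produces some $i' \in L$ with $i' > j_0$ and $R_{i'} \le \delta R_{j_0} \le \delta R_k$. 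Since $i' \in L \subseteq L_B$ and $j_0$ is the largest element of $L_B$ below $k$, we must have $i' \ge k$; by minimality of $i^*$ this gives $i' = i^*$ and hence $R_{i^*} \le \delta R_k$, again contradicting the hypothesis. This closes the case analysis, and with it the proof.
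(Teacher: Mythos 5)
Your proof is correct, and the core reduction is identical to the paper's: show that for every $k$ there is an $i \in L$ with $B_i \le \gamma B_k$ and $R_i \le \delta R_k$, then split $T$ into $T_{B_i}$ and $T_{R_i}$ and use $A_k(x) \le M_k A_0(x)$ and $A_k(x) \le A_K(x)$ along with $A_k(T_k) = R_k + M_k B_k \le \lambda A_k(T_k^*)$. Where you diverge is in how the index $i$ is produced: you run a three-way case split on the position of $k$ relative to $\max L$ and to $i^* = \min\{i \in L : i \ge k\}$, ruling out the awkward subcase by a further sub-split on whether the intermediate $L_B$-index $j_0$ survived into $L$. The paper avoids all of this with a single uniform construction: take $j = \max\{j \in L_B : j \le k\}$ (which gives $B_j \le \gamma B_k$ directly from the acceptance test, whether or not $j=k$), then $i = \min\{i \in L : i \ge j\}$ (which gives $R_i \le \delta R_j$, again whether or not $i=j$), and finally apply the monotonicity lemma with $j \le k$ and $i \ge j$ to chain $B_i \le B_j \le \gamma B_k$ and $R_i \le \delta R_j \le \delta R_k$. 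Both arguments are sound, but the paper's two-step choice of $j$ then $i$ eliminates the case analysis entirely; your version ends up proving the same three inequalities along each branch, just without having noticed that a single $j$ handles them uniformly.
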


\begin{proof}
Let $k \in \{0,\ldots,K\}$.  
Let $j = \max \{j \in L_B | j\le k\}$.  
Either $j = k$ or $k$ was discarded due to $j$ on lines \ref{lb_start}--\ref{lb_end} because $B_k \ge \frac{1}{\gamma}B_j$, and either way $B_j \le \gamma B_k$.  
Now let $i = \min \{i \in L | i \ge j\}$.  
Again $i = j$ or $j$ was pruned due to $i$ on lines \ref{l_start}--\ref{l_end}, and $R_i \le \delta R_j$.  
Applying Lemma \ref{monotonicity_lemma} with $i \ge j$ and $j \le k$, we have $B_i \le B_j \le \gamma B_k$ and $R_i \le \delta R_j \le \delta R_k$.

This is sufficient to bound the cost of $A_k(T)$:
\begin{align*}
A_k(T) = A_k(T_{R_i}) + A_k(T_{B_i}) &\le A_K(T_{R_i}) + M_kA_0(T_{B_i}) \\
& \le c_R R_i + c_BM_k B_i \\
&\le c_R\delta R_k + c_B\gamma M_k B_k \\
& \le \max\{c_R\delta, c_B\gamma\}A_k(T_k) \le \max\{c_R\delta, c_B\gamma\}\lambda A_k(T_k^*)
\end{align*}
The equality follows because $T_{B_i}$ and $T_{R_i}$ partition the edges of $T$.  
The first inequality is because $A_K(x)$ and $M_kA_0(x)$ both upper bound $A_k(x)$, the second is by assumption, the third is from the derivation above, the fourth uses $A_k(T_k) = R_k + M_kB_k$, and the last is because $T_k$ is a $\lambda$-approximation.
\end{proof}

We will primarily assume that our SSRoB algorithm is a generic approximation, but Lemma \ref{structure_lemma} can easily be improved to take advantage of an SSRoB algorithm with a stronger guarantee that separately bounds $R_i$ and $M_iB_i$ in terms of the optimal costs $R_i^*$ and $M_iB_i^*$.
\begin{cor}
\label{bifactor_structure_lemma}
Let $T$, $c_B$, and $c_R$ be as in Lemma \ref{structure_lemma}, and suppose
\begin{itemize}
\item $R_i \le \mu_R R_i^* + \mu_B M_iB_i^*$
\item $M_iB_i \le \nu_R R_i^* + \nu_B M_iB_i^*$
\end{itemize}
then for all $k$, $A_k(T) \le \max\{c_R\delta\mu_R + c_B\gamma\nu_R, c_R\delta\mu_B + c_B\gamma\nu_B\}A_k(T_k^*)$.
\end{cor}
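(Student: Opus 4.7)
The plan is to reuse the opening moves of the proof of Lemma~\ref{structure_lemma} verbatim and only replace the final ``crude'' $\lambda$-approximation step by the bifactor hypothesis, after which the argument reduces to a two-term linear combination that can be bounded by the max of its coefficients.

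First I would reproduce the key inequality chain from Lemma~\ref{structure_lemma}: given $k \in \{0,\ldots,K\}$, define $j = \max\{j \in L_B : j \le k\}$ and $i = \min\{i \in L : i \ge j\}$, so that by exactly the same pruning/monotonicity argument (using Lemma~\ref{monotonicity_lemma}) one obtains $B_i \le B_j \le \gamma B_k$ and $R_i \le \delta R_j \le \delta R_k$. Then, splitting the edges of $T$ into $T_{R_i}$ and $T_{B_i}$ and using $A_k \le A_K$ and $A_k(x) \le M_k A_0(x)$ pointwise, I get
\[
A_k(T) = A_k(T_{R_i}) + A_k(T_{B_i}) \le A_K(T_{R_i}) + M_k A_0(T_{B_i}) \le c_R R_i + c_B M_k B_i \le c_R \delta R_k + c_B \gamma M_k B_k.
\]
This is the point at which the original proof invoked $A_k(T_k) \le \lambda A_k(T_k^*)$; here I instead keep $R_k$ and $M_k B_k$ separate so the bifactor hypothesis can be applied coordinatewise.

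Next I would apply the two assumed bounds, $R_k \le \mu_R R_k^* + \mu_B M_k B_k^*$ and $M_k B_k \le \nu_R R_k^* + \nu_B M_k B_k^*$, to rewrite
\[
A_k(T) \le (c_R \delta \mu_R + c_B \gamma \nu_R)\, R_k^* + (c_R \delta \mu_B + c_B \gamma \nu_B)\, M_k B_k^*.
\]
Since both $R_k^*$ and $M_k B_k^*$ are nonnegative and their sum equals $A_k(T_k^*)$ by definition of the rent and buy decomposition of the optimal tree, bounding each coefficient by the maximum of the two yields
\[
A_k(T) \le \max\{c_R \delta \mu_R + c_B \gamma \nu_R,\ c_R \delta \mu_B + c_B \gamma \nu_B\}\, A_k(T_k^*),
\]
which is the claimed inequality.

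There is essentially no obstacle here: the corollary is just a bookkeeping refinement of Lemma~\ref{structure_lemma}. The only place where care is needed is in making sure that the hypotheses are applied at index $k$ (not at $i$), which is legitimate because the final rent/buy substitution is made only after the monotonicity/pruning chain has already transferred all the $i$-dependence onto $R_k$ and $M_k B_k$.
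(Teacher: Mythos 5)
Your proof is correct and follows essentially the same route as the paper's: both reuse the chain from Lemma~\ref{structure_lemma} up to $A_k(T) \le c_R\delta R_k + c_B\gamma M_k B_k$, substitute the bifactor bounds at index $k$, and close by taking the max of the two coefficients against $R_k^* + M_k B_k^* = A_k(T_k^*)$. Your extra remark about applying the hypotheses at $k$ rather than $i$ is a helpful clarification but not a deviation --- the paper does the same implicitly.
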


\begin{proof}
We change the inequalities in the proof above as follows:
\begin{align*}
A_k(T) \le c_R\delta R_k + c_B\gamma M_kB_k  & \le c_R\delta(\mu_R R_k^* + \mu_B M_kB_k^*) + c_B\gamma(\nu_R R_k^* + \nu_B M_kB_k^*) \\
& \le \max\{c_R\delta\mu_R + c_B\gamma\nu_R, c_R\delta\mu_B + c_B\gamma\nu_B\}A_k(T_k^*)
\end{align*}
\end{proof}

\section{Constructing the Tree}

The construction of the tree itself is quite simple.  We have a set of indices $L$ and core sets $C_i$ for $i \in L$.  Starting with the largest $i \in L$, i.e.\ smallest $B_i$, and working downward, we connect each $C_i$ to $T$, the tree so far, with a LAST.  
Algorithm \ref{tree_alg} describes the procedure more formally.
The notation $G/T$ represents contracting $T$ to a single node in $G$, and $G[C_i]$ is the induced subgraph on $C_i$, so $(G/T)[C_i]$ denotes first contracting $T$ and then restricting to nodes in $C_i$.

\begin{algorithm}[htb]
\caption{Constructing the tree}
\label{tree_alg}
\LinesNumbered
\DontPrintSemicolon
\SetAlgoNoEnd
\SetAlgoNoLine

\KwIn{Graph $G$.  Set $L$ and accompanying $C_i$ for each $i \in L$}
\KwOut{Aggregation tree $T$}

$T \leftarrow \{r\}$ \;

\ForEach{$i \in L$ in decreasing order} {
$T' \leftarrow$ $(\alpha, \beta)$-LAST of $(G/T)[C_i]$ with root $T$ \;
$T \leftarrow T \cup T'$ \;
}

\KwRet{$T$}

\end{algorithm}


\begin{lem}
The graph $T$ constructed by Algorithm \ref{tree_alg} is a tree and spans all demand nodes.
\end{lem}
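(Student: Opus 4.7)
The plan is to prove the two assertions separately by induction on the iterations of the foreach loop of Algorithm \ref{tree_alg}. Let $T^{(k)}$ denote the state of $T$ after $k$ iterations, so $T^{(0)} = \{r\}$ is a trivial tree. I will show inductively that every $T^{(k)}$ is a tree, and then observe that the vertex sets of the processed cores collectively cover $\mathcal{D}$.

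For the tree property, suppose $T = T^{(k)}$ is a tree at the start of an iteration processing some $i \in L$. The algorithm forms $T'$, an $(\alpha,\beta)$-LAST of $(G/T)[C_i]$ rooted at the node representing the contracted $T$. By definition a LAST is a spanning tree of its input graph, so $T'$ is a tree on the vertex set $(C_i \setminus V(T)) \cup \{\text{contracted root}\}$. Uncontracting, each edge of $T'$ incident to the contracted root becomes an edge from some specific vertex of $V(T)$ to a vertex of $C_i \setminus V(T)$, while all other edges of $T'$ remain unchanged. Thus each subtree of $T'$ hanging off the contracted root becomes a branch that attaches to $V(T)$ through a single edge. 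Hence $T \cup T'$ is connected (each new vertex reaches $r$ via its branch and then through $T$) and acyclic (a cycle would force two internally disjoint paths in $T'$ between some pair of vertices, contradicting that $T'$ is a tree), so $T^{(k+1)}$ is a tree and the induction closes.

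For spanning, after all iterations we have $V(T) = \{r\} \cup \bigcup_{i \in L} C_i$, so it suffices to show $\mathcal{D} \subseteq \bigcup_{i \in L} C_i$. The key claim is that $0 \in L$ and $C_0 \supseteq \mathcal{D}$. Because $A_0(x) = \min\{x,1\}$ and demands are integer, renting or buying an edge with any positive flow yield identical cost under $A_0$, so we may without loss of generality take $T_0$ to have all its edges bought. Then $C_0$ equals the vertex set of $T_0$ (which contains $\mathcal{D}$), and $R_0 = 0$. With $R_0 = 0$, the loop building $L$ necessarily admits $i = 0$ on its last step since $0 < R/\delta$ holds trivially. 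Consequently the final iteration of Algorithm \ref{tree_alg}, which processes $i = 0$, adds any remaining demand vertices to $T$.

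The main technical obstacle is to ensure that $(G/T)[C_i]$ is actually connected at each step so that a LAST is well-defined. I would argue this from the observation that the bought edges of $T_i$ together with $r$ form a connected subtree of $T_i$ (which may be enforced without loss of generality, since any bought edge not connected to $r$ through other bought edges can be converted to rented at no greater cost); hence $C_i$ is connected in $G$ and remains so after contracting $T$. The potentially dangerous step, cycle creation at the interface with $T$, is avoided precisely because the LAST is run on the contracted graph, which forces each branch of $T'$ to attach to $V(T)$ via exactly one edge.
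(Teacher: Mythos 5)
Your proof follows essentially the same route as the paper's, which argues in two short steps: $0 \in L$ with $R_0 = 0$ so $C_0$ covers all demands (hence the final layer spans $\mathcal{D}$), and each iteration only attaches new vertices via a tree on the contracted graph, so no cycles arise. You flesh out the cycle-avoidance step with an explicit induction on iterations and an uncontraction argument, and you also flag the side issue of whether $(G/T)[C_i]$ is connected (so that a LAST exists), which the paper does not discuss. That extra care is welcome. One small correction, though: $R_0 = 0$ is not a choice but automatic, since with integer demands every flow-carrying edge has $x_e \ge 1 = M_0$ and thus is bought by definition, and $A_0(0)=0$ for the rest.

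There is one genuine flaw in your connectivity side-argument. You assert that a bought edge not connected to $r$ through other bought edges ``can be converted to rented at no greater cost.'' This is backwards: an edge is classified bought precisely when $x_e \ge M_i$, and in that case the rent cost $l_e x_e$ is at least the buy cost $l_e M_i$, so the conversion would only increase cost. The classification is not a design freedom but is determined by $x_e$ versus $M_i$ under the cost function $A_i$. Fortunately, no WLOG is needed: in a feasible tree routing all demands to $r$, the flow $x_e$ is nonincreasing as one moves away from $r$, so if $x_e \ge M_i$ on some edge, every edge on its path to $r$ also satisfies $x_{e'} \ge x_e \ge M_i$ and is bought. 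Hence the bought edges always form an $r$-rooted connected subtree of $T_i$, and $C_i$ (and therefore $(G/T)[C_i]$, since $r$ lies in $T$) is connected. With that fix, your argument is sound and matches the paper's intent.
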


\begin{proof}
Observe that $0 \in L$, and $R_0 = 0$, so $C_0$ covers all demands.  Therefore after the last iteration $T$ spans $\mathcal{D}$.  
Each iteration only adds edges spanning new vertices, so no cycles are created.
\end{proof}

The tree may contain paths connecting Steiner nodes that carry no flow.  Such edges can be safely pruned or just ignored because they contribute nothing to the cost.

All that remains is to define the partitions $T_{B_i}$ and $T_{R_i}$ and prove the bounds needed in Lemma \ref{structure_lemma}.
The set $T_{B_i}$ contains all edges present in $T$ after connecting $C_i$, and $T_{R_i}$ contains the rest.  Both cost bounds will follow easily from the geometrically changing costs:
the cost $A_0(T_{B_i})$ is dominated by the cost of the $C_i$ layer, an approximate MST, and $A_K(T_{R_i})$ is dominated by the rent costs of the next layer, an approximate shortest-path tree.
First, we bound the normalized buy cost of $T_{B_i}$:

\begin{lem}
\label{buy_lemma}
Let $i \in L$, and $T_{B_i}$ be the tree $T$ after the round when $C_i$ is added.  Then the edge cost $A_0(T_{B_i})$ is at most $\frac{\beta\gamma}{\gamma-1}B_i$.
\end{lem}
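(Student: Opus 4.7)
The plan is to bound $A_0(T_{B_i})$ by the total edge length of $T_{B_i}$ and then sum geometrically over the LASTs that comprise $T_{B_i}$.

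Since $M_0 = 1$ and all demands are integers, every edge of $T_{B_i}$ carrying flow has $A_0(x_e) = 1$ and every edge carrying no flow contributes $0$, so $A_0(T_{B_i}) \leq \sum_{e \in T_{B_i}} l_e$. By construction of Algorithm \ref{tree_alg} (which processes $L$ in decreasing order), $T_{B_i}$ is exactly the union of the $(\alpha,\beta)$-LASTs built in the rounds for indices $j \in L$ with $j \geq i$.

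For each such $j$, I would bound the edge weight of the $j$-th LAST by $\beta B_j$. By the LAST guarantee the weight is at most $\beta \cdot w(\mathrm{MST}((G/T_{\mathrm{prev}})[C_j]))$. The bought edges of $T_j$ form a connected subgraph of total length $B_j$ spanning $C_j \cup \{r\}$; after contracting $T_{\mathrm{prev}}$ (which contains $r$), these edges still connect the image of $C_j$ in $(G/T_{\mathrm{prev}})[C_j]$, since edges with both endpoints in $T_{\mathrm{prev}}$ merely collapse to self-loops at the contracted node and can be discarded without harming connectivity. Hence $w(\mathrm{MST}((G/T_{\mathrm{prev}})[C_j])) \leq B_j$, and the $j$-th LAST contributes at most $\beta B_j$ to $\sum_{e \in T_{B_i}} l_e$.

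Finally I would exploit the geometric decrease of $B_j$ along $L$. The selection rule in Algorithm \ref{layer_alg} ensures that consecutive indices in $L_B$ have $B$-values dropping by a factor strictly greater than $\gamma$, and since $L \subseteq L_B$ this property is inherited (and only strengthened) by $L$. Writing the indices of $L$ with $j \geq i$ in increasing order as $i = j_0 < j_1 < \cdots < j_t$, we get $B_{j_k} \leq B_i/\gamma^k$, so
\[
A_0(T_{B_i}) \;\leq\; \sum_{k=0}^{t} \beta B_{j_k} \;\leq\; \beta B_i \sum_{k=0}^{\infty} \gamma^{-k} \;=\; \frac{\beta\gamma}{\gamma-1}\, B_i.
\]
The main obstacle is the middle step: verifying that contracting $T_{\mathrm{prev}}$ does not inflate the MST weight on $C_j$. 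Once the buy-tree of $T_j$ is seen to survive the contraction as a connected spanning structure on the image of $C_j$, the LAST definition and the geometric series arising from Algorithm \ref{layer_alg}'s selection rule combine immediately to give the claimed bound.
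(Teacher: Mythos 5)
Your proof is correct and takes essentially the same approach as the paper: you bound each layer's LAST by $\beta$ times the MST weight, observe that the bought edges of $T_j$ witness $w(\mathrm{MST}((G/T_{\mathrm{prev}})[C_j])) \le B_j$ (contraction only helps), and exploit the geometric drop in $B_j$ along $L$ inherited from $L_B$. The only cosmetic difference is that you unroll the paper's decreasing induction on $i$ into an explicit geometric series $\sum_k \beta B_i \gamma^{-k} \le \frac{\beta\gamma}{\gamma-1}B_i$, and you make explicit the (correct, relying on integer demands) reduction $A_0(T_{B_i}) \le \sum_{e \in T_{B_i}} l_e$ that the paper leaves implicit.
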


\begin{proof}
The proof is by decreasing induction on $i$, i.e.\ in the order in which the layers are built.
Let $c$ be a constant to be chosen at the end.
The base case is the largest $i \in L$, which is the smallest $i$ such that $B_i = 0$.  In this case, $C_i = \{r\}$, $T_{B_i} = \{r\}$, and the edge cost is 0.

Now let $i \in L$, $k = \min \{k \in L | k > i\}$ be the previous (inner) layer, and suppose the edge cost of $T_{B_k}$ is at most $cB_k$.  
By the construction of $L$, we know $B_k < \frac{1}{\gamma}B_i$, implying $T_{B_k}$ costs at most $\frac{c}{\gamma}B_i$.
The cost of an MST of $C_i$ in $G$ is at most $B_i$, and $T_{B_k}$ may already span part of $C_i$, so connecting the rest with an MST\footnote{We could allow Steiner nodes and use a Steiner tree approximation, but this would not improve the worst-case bound.} of $(G/T_{B_k})[C_i]$ costs at most $B_i$.  
Using an $(\alpha, \beta)$-LAST scales the cost by at most $\beta$.

The total cost of edges laid so far is at most $\frac{c}{\gamma}B_i + \beta B_i$, so the proof is complete as long as $cB_i \ge \frac{c}{\gamma}B_i + \beta B_i$.  Set $c = \frac{\beta\gamma}{\gamma - 1}$:
 \[
 c \ge \beta + \frac{c}{\gamma} \Longleftrightarrow c\left(1-\frac{1}{\gamma}\right) \ge \beta \Longleftrightarrow c \ge \frac{\beta\gamma}{\gamma - 1}
 \]
\end{proof}

Now we bound the rent costs of $T_{R_i}$ by a similar proof.

\begin{lem}
\label{rent_lemma}
Let $i \in L$ and $T_{R_i} = T/T_{B_i}$, i.e.\ all edges outside of $T_{B_i}$.  Then the rent cost $A_K(T_{R_i})$ is at most $\frac{\alpha\delta}{\delta - \alpha - 1}R_i$.
\end{lem}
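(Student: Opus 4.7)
I would mirror the proof of Lemma~\ref{buy_lemma}, but run the induction in the opposite direction: from the smallest $i \in L$ (where $T_{R_0}$ is empty) up to the largest, with constant $c = \alpha\delta/(\delta-\alpha-1)$. Since $0 \in L$ always, $R_0 = 0$, and Algorithm~\ref{tree_alg} builds $T_{B_0}$ only on the final iteration, the base case $i = 0$ is immediate: $T_{R_0} = \emptyset$. For the inductive step, fix $i \in L$ with $i > 0$ and let $j = \max\{k \in L : k < i\}$ be the next smaller $L$-index. Because Algorithm~\ref{tree_alg} processes $L$ in decreasing order, the LAST $\text{LAST}_j$ added while processing index $j$ extends $T_{B_i}$ into $T_{B_j}$, so $T_{R_i} = T_{R_j} \cup \text{LAST}_j$ and
\[
A_K(T_{R_i}) \;=\; A_K(T_{R_j}) + A_K(\text{LAST}_j).
\]
The inductive hypothesis gives $A_K(T_{R_j}) \le c R_j$, and $R_j < R_i/\delta$ from the pruning defining $L$, so $A_K(T_{R_j}) < c R_i/\delta$.

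The heart of the proof is the estimate
\[
A_K(\text{LAST}_j) \;\le\; \alpha\bigl(R_i + A_K(T_{R_j})\bigr).
\]
To establish it, I would decompose $A_K(\text{LAST}_j) = \sum_w d_w \cdot \mathrm{dist}_{\text{LAST}_j}(p_w, T_{B_i})$, summed over demands $w$ whose $T$-path passes through $\text{LAST}_j$, where $p_w \in C_j$ is $w$'s entry point into the LAST (so $p_w = w$ when $w \in C_j$). Apply the $\alpha$-approximate shortest-path guarantee of the LAST to each term, and bound $\mathrm{dist}_{(G/T_{B_i})[C_j]}(p_w, T_{B_i})$ by the length of the walk $p_w \rightsquigarrow w \to C_i$: the first leg follows $w$'s $T$-path up from below (length $\mathrm{dist}_T(w, p_w)$), and the second is a shortest path in $G$ from $w$ to $C_i \subseteq T_{B_i}$ (length $\mathrm{dist}_G(w, C_i)$). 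Summing, the first piece equals $A_K(T_{R_j})$ because $p_w$ is $w$'s entry into $T_{B_j}$, while the second is at most $R_i$ because each such $w$ lies outside $C_i$ and $R_i \ge \sum_{w \notin C_i} d_w \cdot \mathrm{dist}_G(w, C_i)$. Plugging back in,
\[
A_K(T_{R_i}) \;\le\; (1+\alpha)\,A_K(T_{R_j}) + \alpha R_i \;\le\; \frac{(1+\alpha)c}{\delta}R_i + \alpha R_i,
\]
and the smallest $c$ with $(1+\alpha)c/\delta + \alpha \le c$ is exactly $\alpha\delta/(\delta-\alpha-1)$, matching the claim.

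The main obstacle is the shortest-path step itself: the LAST's guarantee compares its distances to shortest paths in the \emph{restricted} graph $(G/T_{B_i})[C_j]$, whereas the detour $p_w \rightsquigarrow w \to C_i$ is a walk in $G$ that typically exits $C_j$ through cores of smaller $L$-layers en route to $w$. The most delicate part of the argument is justifying that this detour still upper bounds the restricted shortest-path distance---either by rerouting excursions outside $C_j$ back through $C_j$ at cost that is absorbed by the existing terms, or by a flow-level accounting that sidesteps the restriction entirely.
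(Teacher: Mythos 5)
Your proposal follows the paper's proof essentially step for step: the same increasing induction on $i\in L$, the same decomposition $T_{R_i}=T_{R_j}\cup\mathrm{LAST}_j$ with $j=\max\{k\in L:k<i\}$, the same triangle-inequality detour ``return each demand to its source, then take a shortest path to $T_{B_i}$,'' and the same recurrence $c\ge (1+\alpha)c/\delta+\alpha$ yielding $c=\alpha\delta/(\delta-\alpha-1)$. The paper's prose phrases the key bound as ``cost of sending all flow in $T_{R_k}$ back to its source and from there to $T_{B_i}$ using shortest paths,'' which is exactly your flow-level accounting $\sum_w d_w\,\mathrm{dist}_T(w,p_w)+\sum_w d_w\,\mathrm{dist}_G(w,C_i)\le A_K(T_{R_j})+R_i$.

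The ``obstacle'' you flag at the end is real if one reads $(G/T_{B_i})[C_j]$ as the literal induced subgraph, but it dissolves under the intended (and necessary) reading: the LAST is built on the \emph{metric closure} of $(G/T_{B_i})$ restricted to $C_j$ plus the contracted root. This reading is forced anyway, since the literal induced subgraph need not even be connected. In the metric closure, the shortest-path distance from $p_w$ to the contracted root is exactly $\mathrm{dist}_{G/T_{B_i}}(p_w,T_{B_i})=\mathrm{dist}_G(p_w,T_{B_i})$, and the triangle inequality in $G$ gives $\mathrm{dist}_G(p_w,T_{B_i})\le \mathrm{dist}_G(p_w,w)+\mathrm{dist}_G(w,C_i)\le \mathrm{dist}_T(p_w,w)+\mathrm{dist}_G(w,C_i)$, with no restriction to $C_j$ to worry about. (The same convention is what makes Lemma~\ref{buy_lemma} go through: the MST of the metric closure of $C_i$ is at most $B_i$ because the bought subtree of $T_i$ has vertex set exactly $C_i$.) So there is no gap to fill by ``rerouting excursions''; once the metric-closure convention is made explicit, your argument is complete and matches the paper's.
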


\begin{proof}
We prove by increasing induction on $i \in L$ (the reverse of Lemma \ref{buy_lemma}) that $A_K(T_{R_i}) \le cR_i$ for some $c$ to be determined.
Since $0 \in L$, and $T_{B_0}$ covers everything, the base case $T_{R_0}$ costs 0 too.

For the inductive case, let $i \in L$, $k = \max\{k \in L | k < i\}$ be the next (outer) layer, and $T_{R_k}$ have rent cost at most $cR_k$.
As before, note $R_k < \frac{1}{\delta}R_i$, so $A_K(T_{R_k}) \le \frac{c}{\delta}R_i$.
Tree $T_{B_i}$ spans $C_i$ and possibly more, so if all demands outside $T_{B_i}$ took the shortest path from their sources to $T_{B_i}$ the shortest-path cost would be at most $R_i$.
However, the edges of $T_{R_k}$ have moved some demands around, and by the time they reach the current layer they may be farther from $T_{B_i}$ than their original sources were.  But by the triangle inequality the cost of sending all demands from their current locations to $T_{B_i}$ via shortest paths is at most $\frac{c}{\delta}R_i + R_i$, the cost of sending all flow in $T_{R_k}$ back to its source and from there to $T_{B_i}$ using shortest paths.  The LAST algorithm guarantees $\alpha$-approximate shortest paths, multiplying the cost by $\alpha$.

Consequently the total rent cost for $T_{R_i}$ is bounded by $\frac{c}{\delta}R_i + \alpha\left(\frac{c}{\delta}R_i + R_i\right)$, which needs to be at most $cR_i$.  We can set $c = \frac{\alpha\delta}{\delta-\alpha-1}$:
\[
c \ge \frac{c}{\delta} + \frac{c\alpha}{\delta} + \alpha
\Longleftrightarrow c\left(1-\frac{1}{\delta}-\frac{\alpha}{\delta}\right) = c\frac{\delta - \alpha - 1}{\delta} \ge \alpha
\Longleftrightarrow c \ge \frac{\alpha\delta}{\delta - \alpha -1}
\]
\end{proof}

We note that Lemma \ref{rent_lemma} explains how we circumvent a major obstacle to an $O(1)$-simultaneous approximation---the $\Omega(\log n)$ distortion lower bound for embedding arbitrary metrics into tree metrics \cite{Bartal96probabilistic}.  
If we needed to maintain distances between many pairs of nodes the task would be hopeless, but Lemma \ref{rent_lemma} shows that it suffices to preserve the distance of each node to the next layer, so the graph of distances to be maintained forms a tree.

We can now complete the proof of our main theorem and choose the optimal parameters.

\begin{thm}
\label{main_thm}
The tree $T$ achieves a simultaneous approximation ratio of $(1+\epsilon)\lambda(8+4\sqrt{5})$ using a $\lambda$-approximation to SSRoB.  In particular,
\begin{itemize}
\item There is a randomized polynomial time algorithm that finds a 47.45 simultaneous approximation with high probability.
\item There is a deterministic polynomial time algorithm that finds a 55.58 simultaneous approximation.
\item There exists a tree that is a 16.95 simultaneous approximation.
\end{itemize}
\end{thm}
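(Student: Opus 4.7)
The heart of the theorem has been assembled in Lemmas \ref{buy_lemma}, \ref{rent_lemma}, and \ref{structure_lemma}; what remains is a four-parameter optimization. Using $c_B = \frac{\beta\gamma}{\gamma-1}$ from Lemma \ref{buy_lemma} and $c_R = \frac{\alpha\delta}{\delta-\alpha-1}$ from Lemma \ref{rent_lemma}, and feeding them into Lemma \ref{structure_lemma}, I obtain for every $k$
\[
A_k(T) \;\le\; \max\!\left\{\frac{\beta\gamma^2}{\gamma-1},\; \frac{\alpha\delta^2}{\delta-\alpha-1}\right\}\lambda\, A_k(T_k^*).
\]
The discretization argument from Section~2 then adds a factor of $(1+\epsilon)$ in extending from the basis $\{A_i\}$ to an arbitrary concave $f$, so the simultaneous ratio is at most $(1+\epsilon)\lambda$ times the bracketed maximum.

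Next I would minimize the bracketed maximum over the parameters. Because the LAST construction allows $\beta$ as small as $\frac{\alpha+1}{\alpha-1}$, I would set $\beta$ tight. After this substitution the first term of the max depends only on $\alpha,\gamma$ and the second only on $\alpha,\delta$, so for fixed $\alpha$ the two terms decouple and may be minimized independently in $\gamma$ and $\delta$. Elementary calculus gives $\gamma=2$ minimizing $\gamma^2/(\gamma-1)$ at value $4$, and $\delta = 2(\alpha+1)$ minimizing $\delta^2/(\delta-\alpha-1)$ at value $4(\alpha+1)$, reducing the problem to
\[
\min_{\alpha>1}\; \max\!\left\{\frac{4(\alpha+1)}{\alpha-1},\; 4\alpha(\alpha+1)\right\}.
\]
The first argument is strictly decreasing in $\alpha$ and the second strictly increasing, so the optimum occurs at equality. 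That balance condition is $\frac{1}{\alpha-1} = \alpha$, i.e.\ $\alpha^2 - \alpha - 1 = 0$, whose positive root is the golden ratio $\alpha = \frac{1+\sqrt{5}}{2}$. Using $\alpha^2 = \alpha+1$, the common value simplifies to $4\alpha(\alpha+1) = 4(2\alpha+1) = 8+4\sqrt{5}$, yielding the claimed $(1+\epsilon)\lambda(8+4\sqrt{5})$ bound.

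Finally I would derive the three numerical corollaries by choosing $\lambda$ appropriately and taking $\epsilon$ small. For the existential statement, $\lambda = 1$ (an optimal SSRoB tree) gives $8+4\sqrt{5} \approx 16.95$. For the deterministic statement, the $3.28$-approximation of Williamson--van Zuylen and Eisenbrand et al.\ gives $3.28(8+4\sqrt{5}) \approx 55.58$. For the randomized statement, the $2.8$-approximation of Eisenbrand et al.\ combined with Byrka et al.\ gives $2.8(8+4\sqrt{5}) \approx 47.45$; the ``with high probability'' clause is handled by the boost-and-pick-best scheme already introduced in Section~2. I do not foresee any serious obstacle: the only non-mechanical step is the joint optimization above, which collapses to one dimension once the $\gamma,\delta$ decoupling is noted, so the analysis is essentially a bookkeeping exercise.
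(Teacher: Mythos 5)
Your proof is correct and follows essentially the same approach as the paper: combine Lemmas \ref{buy_lemma}, \ref{rent_lemma}, and \ref{structure_lemma} to get the bound $(1+\epsilon)\lambda\max\{\beta\gamma^2/(\gamma-1),\ \alpha\delta^2/(\delta-\alpha-1)\}$, then optimize the parameters (the paper's appendix does the same calculus, fixing $\gamma=2$, $\delta=2\alpha+2$, $\beta=(\alpha+1)/(\alpha-1)$, and balancing to get $\alpha$ equal to the golden ratio). Your ordering of the substitutions is slightly different but the optimization and resulting constants are identical, as is the handling of the three $\lambda$ values and the high-probability boosting remark.
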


\begin{proof}
Applying Lemma \ref{structure_lemma} with $c_B = \frac{\beta\gamma}{\gamma-1}$ (Lemma \ref{buy_lemma}) and $c_R = \frac{\alpha\delta}{\delta - \alpha - 1}$ (Lemma \ref{rent_lemma}), the final approximation ratio for an arbitrary cost function $f$ is
\begin{align*}
(1+\epsilon)\lambda \max\left\{ \frac{\beta\gamma^2}{\gamma-1}, \frac{\alpha\delta^2}{\delta-\alpha-1}\right\}
\end{align*}
where the extra $1+\epsilon$ comes from the approximation of $f$ by a combination of $A_i$'s.  Now it is a simple matter of applying calculus to find the optimal values for $\alpha$, $\beta$, $\gamma$, and $\delta$.  We set
\begin{align*}
\alpha &= \frac{1+\sqrt{5}}{2} & \beta &= 2+\sqrt{5} & \gamma &= 2 & \delta &=3+\sqrt{5}
\end{align*}
for which $\frac{\beta\gamma^2}{\gamma-1} = \frac{\alpha\delta^2}{\delta-\alpha-1}$.
The derivation of these values is presented in the appendix.

This gives us $\frac{\beta\gamma^2}{\gamma-1} = 4(2+\sqrt{5})$ so the simultaneous approximation ratio is
\[
(1+\epsilon) \lambda (8+4\sqrt{5}).
\]
Now,
\begin{itemize}
\item Using the best randomized approximation $\lambda = 2.8$, and the ratio is 47.45 with high probability.
\item Using the best deterministic approximation $\lambda = 3.28$, and the ratio is 55.58.
\item If the algorithm is allowed to run in exponential time $\lambda = 1$, and the ratio is 16.95.
\end{itemize}
\end{proof}

The 2.8-approximation of Eisenbrand et al.\ \cite{eisenbrand2010connected} actually provides a slightly stronger guarantee on $R_i$ and $B_i$, and we can use Corollary \ref{bifactor_structure_lemma} to get a tiny improvement in the approximation ratio at the cost of a more complex derivation:

\begin{thm}There is a randomized polynomial time algorithm that finds a 47.07 simultaneous approximation with high probability.
\label{bifactor_main_thm}
\end{thm}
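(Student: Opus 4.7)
The plan is to repeat the argument of Theorem \ref{main_thm} but replace the generic $\lambda$-approximation guarantee $R_i + M_i B_i \le \lambda(R_i^* + M_i B_i^*)$ with the bifactor guarantee proved by Eisenbrand et al.\ \cite{eisenbrand2010connected} (together with the Steiner tree improvement of Byrka et al.\ \cite{byrka2010improved}), which is exactly the bound derived in the appendix to obtain $\lambda = 2.8$. That derivation produces explicit constants $\mu_R, \mu_B, \nu_R, \nu_B$ with the property that the analysis on $R_i$ and $M_i B_i$ separately is tighter than the black-box combination $\max\{\mu_R + \nu_R, \mu_B + \nu_B\} \le 2.8$ would suggest; the whole point is that $\mu_R, \nu_B$ are small while $\nu_R$ and $\mu_B$ are somewhat larger, and the joint worst-case trade-off is what hits $2.8$. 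Extracting those four numbers from the appendix is the first step.

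Next, I would feed these $\mu$'s and $\nu$'s into Corollary \ref{bifactor_structure_lemma} together with the constants $c_B = \frac{\beta\gamma}{\gamma-1}$ from Lemma \ref{buy_lemma} and $c_R = \frac{\alpha\delta}{\delta-\alpha-1}$ from Lemma \ref{rent_lemma}, still accompanied by the outer $(1+\epsilon)$ factor from the piecewise linear approximation of $f$. This yields a simultaneous ratio of
\[
(1+\epsilon)\max\!\left\{\frac{\alpha\delta^2}{\delta-\alpha-1}\mu_R + \frac{\beta\gamma^2}{\gamma-1}\nu_R,\ \ \frac{\alpha\delta^2}{\delta-\alpha-1}\mu_B + \frac{\beta\gamma^2}{\gamma-1}\nu_B\right\},
\]
which is now a four-parameter optimization subject to $\alpha > 1$, $\gamma > 1$, $\delta > \alpha + 1$, and $\beta \ge \frac{\alpha+1}{\alpha-1}$.

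The main obstacle is this optimization: unlike in Theorem \ref{main_thm}, where balancing $\frac{\beta\gamma^2}{\gamma-1} = \frac{\alpha\delta^2}{\delta-\alpha-1}$ immediately collapsed the max and left a clean symmetric problem, here the two arguments of the max are asymmetric linear combinations of the two layer costs, and the balance point depends on the ratio $\mu_R/\nu_R$ versus $\mu_B/\nu_B$. I would proceed in the standard way: (i) observe that at the optimum the LAST constraint $\beta = \frac{\alpha+1}{\alpha-1}$ is tight (smaller $\beta$ always helps, and only $\beta$ multiplies the buy term) and eliminate $\beta$; (ii) argue that the two arguments of the max must be equal at the optimum (if one is strictly smaller, a small perturbation of $\gamma$ or $\delta$ strictly decreases the other), giving a single equality that pins down one parameter in terms of the rest; (iii) take partial derivatives in the remaining two variables and solve the resulting system numerically. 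I expect the optimal $(\alpha, \gamma, \delta)$ to be close to, but slightly shifted from, the symmetric values $\bigl(\tfrac{1+\sqrt 5}{2}, 2, 3+\sqrt 5\bigr)$ of Theorem \ref{main_thm}, with the shift absorbing the asymmetry in the bifactor bounds.

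Finally, plugging the numerically optimal $(\alpha, \beta, \gamma, \delta)$ back in and letting $\epsilon \to 0$ should yield the claimed $47.07$ (down from $47.45$), with the same high-probability caveat as in Theorem \ref{main_thm} arising from the randomization of the underlying $2.8$-approximation for SSRoB. No new structural ideas are needed beyond Corollary \ref{bifactor_structure_lemma} and Lemmas \ref{buy_lemma}--\ref{rent_lemma}; the whole gain comes from not throwing away the bifactor information before optimizing the outer parameters.
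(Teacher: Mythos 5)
Your structural outline matches the paper's proof exactly: plug the Eisenbrand et al.\ bifactor bound into Corollary~\ref{bifactor_structure_lemma} via Lemmas~\ref{buy_lemma}--\ref{rent_lemma}, then optimize parameters. But there is a concrete gap in the optimization that means your version would not actually reach $47.07$.

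The bifactor guarantee from \cite{eisenbrand2010connected} is not four fixed numbers: it is parametrized by a tunable $x \in (0,1]$, with $\mu_R = 2$, $\mu_B = 0.807/x$, $\nu_R = \rho(x+\epsilon)$, $\nu_B = \rho$. The appendix that derives the black-box $\lambda = 2.8$ fixes $x = 0.5735$ because that is the choice minimizing $\max\{\mu_R + \nu_R,\ \mu_B + \nu_B\}$. You propose ``extracting those four numbers from the appendix,'' which freezes $x$ at $0.5735$ and leaves only $\alpha, \beta, \gamma, \delta$ free. With $\gamma = 2$, $\delta = 2\alpha+2$, $\beta = \frac{\alpha+1}{\alpha-1}$, the expression reduces to $4(\alpha+1)\max\{\alpha\mu_R + \frac{\nu_R}{\alpha-1},\ \alpha\mu_B + \frac{\nu_B}{\alpha-1}\}$; note $\gamma$ and $\delta$ multiply \emph{both} arguments of the $\max$ identically, so perturbing them cannot balance the two branches (your step (ii) as stated does not work). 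Balancing can only come from $\alpha$, and setting the branches equal with $x = 0.5735$ forces $\alpha(\alpha-1) = 1$, i.e.\ $\alpha = \frac{1+\sqrt{5}}{2}$, giving a ratio of about $47.4$ --- barely below the $47.45$ of Theorem~\ref{main_thm} and not $47.07$. The paper instead keeps $x$ as a fifth free parameter, uses the balance condition to express $x$ as a function of $\alpha$, and then minimizes over $\alpha$, arriving at $x \approx 0.5995$, $\alpha \approx 1.5495$. Almost all of the improvement from $47.45$ to $47.07$ comes from this joint optimization over $x$; your ``whole gain comes from not throwing away the bifactor information'' intuition is right in spirit, but the information you are throwing away is exactly the freedom in $x$.
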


\begin{proof}
Lemma 2 and Theorem 5 in \cite{eisenbrand2010connected} prove that
\begin{align*}
\E[R_i] \le 2R_i^* + \frac{.807}{x}M_iB_i^* && \E[M_iB_i] \le \rho(x+\epsilon)R_i^* + \rho M_iB_i^*
\end{align*}
where $\rho=1.39$ is the Steiner tree approximation ratio and $x\in (0,1]$ is a parameter.
Applying Corollary \ref{bifactor_structure_lemma} with
\begin{align*}
\mu_R = 2 && \mu_B = \frac{.807}{x} && \nu_R = \rho(x+\epsilon) && \nu_B = \rho
\end{align*}
the simultaneous ratio is bounded by
\[
(1+\epsilon)\max\left\{\frac{2\alpha\delta^2}{\delta-\alpha-1}+\frac{\rho(x+\epsilon)\beta\gamma^2}{\gamma-1},\frac{.807\alpha\delta^2}{x(\delta-\alpha-1)} + \frac{\rho\beta\gamma^2}{\gamma-1}\right\}
\]
Using
\begin{align*}
x= .5995 && \alpha = 1.5495 && \beta = \frac{\alpha+1}{\alpha-1} && \gamma=2 && \delta = 2\alpha+2
\end{align*}
yields a simultaneous ratio of 47.07.  The parameters are derived in the appendix. 
\end{proof}

We leave as an open question the problem of exploiting Corollary \ref{bifactor_structure_lemma} to substantially improve the ratio.

%
%

\subsection{Runtime}

Let $t(n,m)$ be the running time of our SSRoB approximation on a graph with $n$ vertices and $m$ edges, which must be at least $\Omega(n)$ to write down the output.  When $\epsilon$ is constant, running the SSRoB approximation for each $i$ takes $O(t(n,m)\log D)$.  Subsequent loops in Algorithm \ref{layer_alg} take $O(n\log D) = O(t(n,m)\log D)$.  

For each of the $O(\log D)$ iterations of Algorithm \ref{tree_alg} we need to do a graph contraction and run the LAST algorithm, which requires computing the MST and shortest path trees.  The computation of the shortest path tree takes $O(m + n\log n)$ and dominates the other steps.  Combining the two algorithms, the total time is $O((t(n,m) + m + n\log n)\log D)$.

\section{Open Problems}

We have answered the open questions posed by Goel and Estrin and Goel and Post \cite{goel2003soc, goel2009oblivious}, but there are several avenues for further work.  Our simultaneous ratio of 47.45 already surpasses many algorithms for normal SSBaB and is only a factor of 2.33 away from the best.  
It would be nice to eliminate this gap or, alternately, prove that a gap exists between the approximation achievable for fixed $f$ and the best simultaneous ratio.  We know of no lower bounds on what simultaneous ratio may be possible, so any progress in this direction would also be interesting.
Generalizing the settings in which $O(1)$ simultaneous ratios are possible would be interesting, but may be unlikely given that one must contend with lower bounds for metric tree embedding \cite{Bartal96probabilistic} and multi-sink buy-at-bulk \cite{andrews2004hardness}.

\section*{Acknowledgements}
We thank the anonymous reviewers for many helpful comments that improved the presentation and for suggesting Corollary \ref{bifactor_structure_lemma} and Theorem \ref{bifactor_main_thm}.

\pdfbookmark[1]{\refname}{My\refname}
\bibliographystyle{alphaurl_inline}
\bibliography{r_2}

\appendix

\section{Derivation of optimal parameters}

Here we derive the optimal values for the parameters used in the proofs of Theorems \ref{main_thm} and \ref{bifactor_main_thm}.

\begin{proof}[Parameters for Theorem \ref{main_thm}]
We need to minimize the expression
\begin{align}
\label{approx_ratio}
\max\left\{ \frac{\beta\gamma^2}{\gamma-1}, \frac{\alpha\delta^2}{\delta-\alpha-1}\right\}
\end{align}
The easiest parameters to fix are $\gamma$ and $\delta$.  For $\gamma$:
\[
\frac{d}{d\gamma}\left[\frac{\beta\gamma^2}{\gamma-1}\right] = \beta\frac{(2\gamma)(\gamma-1) - \gamma^2}{(\gamma-1)^2} = 0
\Longrightarrow \gamma(\gamma-2) = 0 \Longrightarrow \gamma = 2
\]
For $\delta$:
\begin{gather*}
\frac{d}{d\delta}\left[\frac{\alpha\delta^2}{\delta-\alpha-1}\right] = 
\alpha\frac{2\delta(\delta-\alpha-1) - \delta^2}{(\delta-\alpha-1)^2} = 0 \\
\Longrightarrow \delta^2 - 2\alpha\delta - 2\delta = \delta(\delta - 2\alpha -2) = 0
\Longrightarrow \delta = 2\alpha + 2
\end{gather*}

Plugging $\gamma = 2$ and $\delta = 2\alpha+2$ into \eqref{approx_ratio}, $\beta\frac{\gamma^2}{\gamma-1} = 4\beta$, and
\[
\frac{\alpha\delta^2}{\delta-\alpha-1} = \frac{\alpha(2\alpha+2)^2}{(2\alpha+2)-\alpha-1} 
= 4\alpha(\alpha+1)
\]
so \eqref{approx_ratio} is now $\max\{4\beta,4\alpha(\alpha+1)\}$.  The constraints on $\alpha$ and $\beta$ require $\beta \ge \frac{\alpha+1}{\alpha-1}$ \cite{khuller1995bms}, so one term blows up if the other shrinks.  To minimize the maximum set the two expressions to be equal:
\[
\beta = \frac{\alpha+1}{\alpha-1} = \alpha(\alpha+1)
\Longrightarrow \alpha(\alpha-1) = 1 
\Longrightarrow \alpha^2-\alpha-1 = 0 \Longrightarrow \alpha = \frac{1 \pm \sqrt{5}}{2}
\]
Using $\alpha = \frac{1+\sqrt{5}}{2}$, we get
\[
\beta = \frac{\alpha+1}{\alpha-1} = \frac{3+\sqrt{5}}{-1+\sqrt{5}} = \frac{(3+\sqrt{5})(1+\sqrt{5})}{4}
= \frac{8 + 4\sqrt{5}}{4} = 2+\sqrt{5}
\]
and $\delta = 2\alpha+2 = 3+\sqrt{5}$.
\end{proof}

\begin{proof}[Parameters for Theorem \ref{bifactor_main_thm}]
We need to minimize
\begin{align}
\label{bifactor_approx_expression}
\max\left\{\frac{2\alpha\delta^2}{\delta-\alpha-1}+\frac{\rho(x+\epsilon)\beta\gamma^2}{\gamma-1},\frac{.807\alpha\delta^2}{x(\delta-\alpha-1)} + \frac{\rho\beta\gamma^2}{\gamma-1}\right\}
\end{align}

For $\gamma$ and $\delta$, both expressions inside the $\max$ function are minimized exactly as above in Theorem \ref{main_thm} with $\gamma = 2$ and $\delta = 2\alpha+2$.  Also as in Theorem \ref{main_thm}, $\beta$ can be set to $\frac{\alpha+1}{\alpha-1}$, the minimum allowed by the constraints.  Plugging in these values and simplifying, expression \ref{bifactor_approx_expression} becomes
\[
4(\alpha+1)\max\left\{2\alpha+\frac{\rho(x+\epsilon)}{\alpha-1}, \frac{.807\alpha}{x} + \frac{\rho}{\alpha-1}\right\}
\]
Set the two expressions inside the maximization to be equal, and solve the resulting quadratic equation in $x$ to get (for $\epsilon=0$):
\[
x = \frac{1}{2} - \frac{\alpha(\alpha-1)}{\rho} + \sqrt{\frac{\alpha^2(\alpha-1)^2}{\rho^2} - \frac{.193\alpha(\alpha-1)}{\rho} + \frac{1}{4}}
\]
using that $x > 0$.

The problem is now to minimize
\[
4(\alpha+1)\left(2\alpha + \frac{\rho x}{\alpha-1}\right) = 4\alpha(\alpha+1) + \frac{2\rho(\alpha+1)}{\alpha-1} + 4(\alpha+1)\sqrt{\alpha^2 - \frac{.193\rho\alpha}{\alpha-1} + \frac{\rho^2}{(\alpha-1)^2}}
\]
This expression is unwieldy to optimize analytically but when $\rho=1.39$ it achieves a minimum of about 47.068 for $\alpha \approx 1.5495$.  When $\alpha=1.5495$, $x \approx .5995$.
\end{proof}

\section{Approximation ratio for SSRoB}
\label{appendix_ratios}

The current-best algorithm for SSRoB \cite{eisenbrand2010connected} depends on the approximation ratio for Steiner tree, which has recently been reduced to 1.39 \cite{byrka2010improved}.  The improved SSRoB ratio does not currently appear in the literature, so we include the calculation for completeness.  See the original paper for details.

\begin{thm}[\cite{eisenbrand2010connected, byrka2010improved}]
There is a 2.8-approximation for SSRoB.
\end{thm}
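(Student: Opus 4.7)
My plan is to inherit the entire algorithmic and structural analysis from Eisenbrand et al.\ \cite{eisenbrand2010connected} and merely re-optimize a single scalar parameter under the improved Steiner tree ratio. Their algorithm samples each demand independently with probability controlled by a parameter $x \in (0,1]$, computes a $\rho$-approximate Steiner tree on the sampled demands together with the root (which is bought), and then routes each unsampled demand to this tree via a rented path chosen by the core-detouring scheme. Substituting the Byrka et al.\ value $\rho = 1.39$ \cite{byrka2010improved} is sound because Steiner tree enters the algorithm only as a black box; no other step depends on how the tree was produced.

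The bifactor bounds that Eisenbrand et al.\ establish for this construction are already quoted in the proof of Theorem \ref{bifactor_main_thm}: for the resulting solution,
\begin{align*}
\E[\text{rent}] \le 2 R^* + \tfrac{0.807}{x}\, MB^* \qquad \text{and} \qquad \E[\text{buy}] \le \rho(x+\epsilon) R^* + \rho\, MB^* ,
\end{align*}
where $R^*$ and $MB^*$ denote the optimal rent and normalized buy cost and $\epsilon$ is a slack term that vanishes with the number of independent trials. Summing these bounds and using $\mathrm{OPT} \ge R^* + MB^*$ yields an expected total cost of at most
\begin{align*}
\max\!\Bigl\{\,2 + \rho(x+\epsilon),\; \rho + \tfrac{0.807}{x}\,\Bigr\} \cdot \mathrm{OPT} .
\end{align*}

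The final step is to minimize this maximum in $x$. Setting the two expressions equal (in the limit $\epsilon \to 0$) and clearing denominators produces the quadratic $\rho x^2 + (2-\rho) x - 0.807 = 0$, whose positive root with $\rho = 1.39$ is approximately $x \approx 0.574$; substituting back gives the approximation ratio $2 + 1.39 \cdot 0.574 \approx 2.8$. Since the algorithm is randomized but the failure probability can be driven below any polynomial threshold by repetition and selection of the cheapest output, this yields the claimed 2.8-approximation. The only real difficulty is arithmetic bookkeeping, ensuring the quadratic is solved to enough precision that the ratio rounds to 2.8 rather than something slightly larger; the combinatorial content of the proof is imported verbatim from \cite{eisenbrand2010connected}.
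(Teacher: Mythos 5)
Your proof is correct and follows essentially the same route as the paper: you inherit the bifactor rent/buy bounds from Eisenbrand et al., substitute $\rho = 1.39$, combine the two bounds against $\mathrm{OPT} \ge R^* + MB^*$, and balance the coefficients by solving the quadratic $\rho x^2 + (2-\rho)x - 0.807 = 0$, landing at $x \approx 0.574$ and ratio $\approx 2.80$. The paper's proof is identical in substance (it quotes the combined expected-cost expression directly and sets the $R^*$ and $MB^*$ coefficients equal, using $x = 0.5735$), so the two arguments agree down to the same quadratic and the same numerical answer.
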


\begin{proof}
By the proof of Theorem 6 in \cite{eisenbrand2010connected}, there is an SSRoB algorithm with expected cost
\[
\rho(MB^* + (x+\epsilon)R^*) + 2R^* + 0.807\frac{MB^*}{x}
\]
where $\rho$ is the approximation ratio for Steiner tree and $x$ is a parameter in $(0,1]$.  Set the coefficients of $R^*$ and $MB^*$ to be equal and solve the resulting quadratic equation in $x$.  For $\rho=1.39$, choosing $x = .5735$ gives an approximation ratio of 2.80.
\end{proof}

\end{document}